\newif\ifuseboldmathops
\newif\ifuseittextabbrevs
	\newcommand{\reals}{\mathbf{R}}
	\newcommand{\reals}{\mathbb{R}}
	\newcommand{\Expect}{\mathop{\bf E{}}\nolimits}
	\newcommand{\Expect}{\mathop{\mathbb{E}{}}\nolimits}
\DeclareMathOperator*{\optmax}{\mathrm{maximize}}
\DeclareMathOperator*{\optsts}{\textrm{s.t.}}
\newcommand{\obs}{\mathsf{Obs}}
\newcommand{\emission}{\mathbf{E}}
\newcommand{\trans}{\mathbf{P}}
 \acrodef{mdp}[MDP]{Markov Decision Process}
  \acrodef{hmm}[HMM]{Hidden Markov Model}
\acrodef{pomdp}[POMDP]{Partially Observable MDP}
 \acrodef{tl}[TL]{Temporal Logic}
  \acrodef{vm}[VM]{Virtual Machine}
  \acrodef{mtd}[MTD]{Moving Target Defense}
    \acrodef{pctl}[PCTL]{Probabilistic Computation Tree Logic}
        \acrodef{sdn}[SDN]{software-defined networking}
\theoremstyle{definition}
 \newtheorem{definition}{Definition}
\newtheorem{remark}{Remark}
\newtheorem{theorem}{Theorem}
\newcommand{\dist}[1]{\mathcal{D}(#1)}
 \definecolor{darkgreen}{rgb}{0,0.5,0}
  \definecolor{darkblue}{rgb}{0.36,0.54,0.66}
\title{Covert Planning against Imperfect Observers}
\author{Haoxiang Ma}
\affiliation{
  \institution{University of Florida}
  \city{Gainesville, FL}
  \country{United State}}
\email{hma2@ufl.edu}
\author{Chongyang Shi}
\affiliation{
  \institution{University of Florida}
  \city{Gainesville, FL}
  \country{United State}}
\email{c.shi@ufl.edu}
\author{Shuo Han}
\affiliation{
  \institution{University of Illinois Chicago}
  \city{Chicago, IL}
  \country{United State}}
\email{hanshuo@uic.edu}
\author{Michael R. Dorothy}
\affiliation{
  \institution{DEVCOM Army Research Laboratory}
  \city{Adelphi, MD}
  \country{United State}}
\email{michael.r.dorothy.civ@army.mil}
\author{Jie Fu}
\affiliation{
  \institution{University of Florida}
  \city{Gainesville, FL}
  \country{United State}}
\email{fujie@ufl.edu}
\begin{abstract}

Covert planning refers to a class of constrained planning problems where an agent aims to accomplish a task with minimal information leaked to a passive observer to avoid detection. However, existing methods of covert planning often consider deterministic environments or do not exploit the observer's imperfect information.  This paper studies how covert planning can leverage the coupling of stochastic dynamics and the observer's imperfect observation to achieve optimal task performance without being detected. Specifically, we employ a Markov decision process to model the interaction between the agent and its stochastic environment, and a partial observation function to capture the leaked information to a passive observer. Assuming the observer employs hypothesis testing to detect if the observation deviates from a nominal policy, the covert planning agent aims to maximize the total discounted reward while keeping the probability of being detected as an adversary below a given threshold. We prove that finite-memory policies are more powerful than Markovian policies in covert planning. Then, we develop a primal-dual proximal policy gradient method with a two-time-scale update to compute a (locally) optimal covert policy. We demonstrate the effectiveness of our methods using a stochastic gridworld example. Our experimental results illustrate that the proposed method computes a policy that maximizes the adversary's expected reward without violating the detection constraint,   and empirically demonstrates how the environmental noises can influence the performance of the covert policies.

%   In this paper, we investigate covert planning against partially observable opponents in a stochastic environment. We consider an informed defender who uses a likelihood ratio test to infer whether the agent is a nominal user or an adversary based on the observations received.  In this context, we propose a primal-dual proximal policy gradient framework to address this covert planning problem. We demonstrate the effectiveness of our framework using a stochastic gridworld example. Our experimental results illustrate that the proposed framework computes a policy that maximizes the adversary's expected reward without violating the detection constraint, under varying system stochasticities.
\end{abstract}
\keywords{Markov decision processes, Deception, Covert Planning}
\begin{document}
\maketitle

\section{Introduction}  
Covert planning refers to accomplishing some tasks with minimal information leaked to a passive observer to avoid detection. Such planning algorithms are useful in various security applications including surveillance and crime prevention. For example, a security patrolling agent (robot or human) would need to minimize the knowledge of his presence while collecting information in several regions of interest. In a contested search and rescue mission, a human-robot team may be sent to gather information in the hostile environment without being detected. Game AI is another area that uses covert planning algorithms \cite{aleneziStealthyPathPlanning2022}. An AI player can use covert planning to cause the element of surprise to the human player, thus making the game more entertaining and realistic.".

In this work, we study a class of covert planning in stochastic environments. Consider a planning problem in a stochastic environment modeled as an \ac{mdp}. The goal of the planning agent is to maximize a  discounted total reward, while ensuring covert behavior against an observer. Specifically, the covert constraint requires that with a high probability, an observer could not detect any deviation from a nominal behavior modeled by a Markov chain, with its imperfect observation of the agent's path.  We employ a sequential likelihood ratio test to construct the covert constraint and prove that a finite-memory policy can be more powerful than a Markovian policy in the resulting constrained \ac{mdp}. Due to the intractable search space for finite-memory policies, we develop a primal-dual gradient-based policy search method to compute an optimal and covert Markov policy. To mitigate the distribution draft and improve the stability, we employ a two-time-scale approach and a sample-efficient estimation for the policy gradient. We demonstrate the performance of the proposed algorithms using a security patrolling agent tasked with visiting a set of goal states, while the nominal behavior is obtained from other users that perform routine activities in the dynamic environment. 

\paragraph*{Related Work}
The work on covert path planning is closely related to stealthy evader strategies in the pursuit-evasion game and covert robots \cite{al2011robotic}. 
In a pursuit-evader game, a team of pursuers aims to locate and capture one or more evaders. The pursuit-evasion game can be of imperfect information where the evader hides at or moves between locations unobservable to to pursuer. The interactions are often formulated as a hide-and-seek game \cite{wangLearningHideandseek2014,GabiGoobar1635600}. The equilibrium of pursuit-evasion games with sensing limitation has been investigated for continuous-state dynamical systems \cite{bopardikar2008discrete,gerkey2006visibility} and deterministic games on graphs \cite{islerRandomizedPursuitEvasionLocal2006}.  Our formulation can be viewed as planning for a stealthy evader who aims to achieve the goal while remaining hidden in a stochastic environment. Specifically, the nominal behavior in our setting is simply the environment dynamics without the presence of an evader. However, due to the stochastic dynamics and noisy observation, the covert planning cannot be solved with existing algorithms for hide-and-seek games or plan obfuscation that consider deterministic dynamics. 

For stochastic systems modeled as \ac{mdp}s, deceptive planning has been studied. In \cite{karabagDeceptionSupervisoryControl2021}, the authors study a   deceptive planning problem where a supervisor determines a reference policy for the agent to follow, while the agent instead uses a different, deceptive policy
to achieve a secret task. The planning problem is formulated such that the agent minimizes the divergence between the distributions under the deceptive policy and that under the supervisor's policy, while ensuring the probability of achieving a secret task is greater than a given threshold. The KL-divergence between a policy and 
a reference policy is also used as a metric for deceptiveness in \cite{savasEntropyMaximizationPartially2022,savas2022deceptive} to deceive the supervisor/observer into believing the agent's policy follows a given reference. 
Other related work includes plan recognition \cite{ramirezGoalRecognitionPOMDPs,sukthankar2014plan} where the observer aims to infer the goal of an agent, given a finite set of possible goals and its observation.
 Covert planning can be viewed as counter-plan recognition. 

In comparison to deceptive planning in \ac{mdp}s \cite{karabagDeceptionSupervisoryControl2021,savasEntropyMaximizationPartially2022,savas2022deceptive}, our work differs in the following aspects: 1)  The existing work assumes full observations of the passive observer, whiles we consider observers with imperfect observations and address how the agent can leverage both the noise in the environmental dynamics and imperfect information of the observer for covertness. This explicit consideration of observation functions provides more insight into verifying the effectiveness of a sensor design against deceptive, covert planning adversaries. 2) We employ the likelihood ratio test for anomaly detection to generate the covert constraint, instead of minimizing the policy difference. This approach allows the planning algorithm to explicitly bind the probability of detection.

%They employ the occupancy measures of the deceptive policy as the decision variables and formulated the convex optimization problem for solving the deceiver strategy. They also study the complexity for the synthesis of optimal reference policies that prevent deceptive policy and achieve the supervisor’s task with high probability. 
% In \cite{savasEntropyMaximizationPartially2022}, the authors study entropy-maximization planning with partial observations. The goal of entropy-maximization is to make the policy-induced stochastic process unpredictable to an outside observer. 
% In \cite{savas2022deceptive}, the authors consider autonomous agents to allocate resources to deceive adversaries regarding their objectives so that they can achieve the desired task. They first propose a prediction model to compute a reference policy, then evaluate the deceptiveness using  KL-divergence, and develop a convex optimization-based planning algorithm to allocate resources. 

% \todo[inline]{add more related literature in DES and AI planning.}
%   opacity in model identification by Hadjicostis's group \cite{keroglouProbabilisticSystemOpacity2016}.

This paper is organized as follows. We provide the preliminary definitions and formal problem statement in Section~\ref{sec: preliminary}. In Section~\ref{sec: main result}, we analyze the $(1 - \alpha)$-covert planning problem, propose a two-time-scale primal-dual proximal policy gradient method, and compute a (locally) optimal covert policy. In Section ~\ref{sec: experiment}, we demonstrate our results using a stochastic gridworld example. We show our framework maximizes the adversary's expected reward without violating the detection constraints. We conclude in Section ~\ref{sec: conclusion} and discuss future directions.

\section{Preliminary and Problem Formulation}
\label{sec: preliminary}
The planning problem is modeled as a Markov decision process  $M=(S, A, P,s_0, R, \gamma)$ where $S$ is a finite set of states, $A$ is a finite set of actions, $P:S \times A\rightarrow \dist{S}$ is a probabilistic transition function and  $P(s'|s, a)$ is the probability of reaching state $s'$ given that action $a$  is taken at the state $s$. The initial state is $s_0$. 
The planning objective for the agent (referred to as player 1/P1) is described by a reward function $R: S\times A\rightarrow \reals$.

  A policy for $M$ is a function $\pi: \mathscr{D} \rightarrow \mathscr{C}$ where it is called \emph{memoryless or Markovian} if $\mathscr{D}=S$; \emph{finite-memory} if $\mathscr{D}=(S\times A)^* S$; \emph{deterministic} if $\mathscr{C}=A$, and \emph{randomized} if $\mathscr{C}=\dist{A}$.

For a Markovian policy $\pi:S\rightarrow \dist{A}$,  P1's value function  $V  ^{\pi}: S \rightarrow \reals$ is defined as
\[
V ^{\pi}(s) = E_{\pi}[\sum\limits_{k = 0}^{\infty}\gamma^{k}R(s_k, \pi(s_k))|s_0 =s],
\] where $E_{\pi}$ is the expectation with respect to the probability distribution induced by the policy $\pi$ from the \ac{mdp} $M$, and $s_k$ is the $k$-th state in the Markov chain induced from the \ac{mdp} $M$ under the policy $\pi$, starting from state $s$. % P1's value given the initial state $s_0$ is  $V(s_0)$.

Given the \ac{mdp} $M$, the goal of P1 is to maximize the total discounted reward given some discounting factor $\gamma$. In addition, the agent must ensure its behavior is covert with respect to a passive observer  (player 2/P2) whose imperfect observation function is given as follows. % Such a covertness constraint is described next.
% \paragraph*{The observation of a passive observer}
% We consider the setting where P1's activity in this stochastic environment is observed by a supervisor (player 2/P2) with imperfect observations. 

\begin{definition}[Observation function of P2]
Let $O$ be a finite set of observations.  The state-observation function of P2 is $\obs_S: S  \rightarrow \dist{O}$ that maps a state $s$ to a distribution $\obs_S(s)$ over observations. The action observation function is state-dependent, defined as $\obs_A: S\times A \rightarrow \dist{O}$ that maps an action $a$ from $s$ to a distribution $\obs_A(s, a)$ over observations. 
\end{definition}
Considering the actions that are state-dependent allows for the most general class of observation functions. 
Without loss of generality, we denote $\obs: S\cup S\times A\rightarrow \dist{O}$ as the combined state and action observation function.

The goal of P2 is to detect if there is any deviation of the agent's behavior from a normal user's behavior. A normal user follows a Markovian policy $\pi_0$ in $M$, referred to as the \emph{nominal policy}. As a result, 
the normal user's behavior is modeled as a hidden Markov model induced from the original \ac{mdp} $M$ given some nominal user policy and the defender's observation function.

\begin{definition}[HMM modeling the P2's observation given the nominal policy $\pi_0$]\label{def:hmm0}

 Given the \ac{mdp} $M =(S,A, P, s_0)$, the nominal policy $\pi_0: S\rightarrow \dist{A}$, and an observation function $\obs:S \times  A\rightarrow \dist{O}$, the stochastic process of observations is captured using a discrete HMM(with state emission), 
 \[
 M_0  = \langle S \cup S\times A, O, \trans, \emission, s_0 \rangle 
 \]
  \begin{itemize}
     \item $ S \cup S\times A $, including two types of states, a decision state $s$ at which an action will be selected, and a nature's state $(s, a)$ at which the next state will be determined according to a probability distribution.  
    \item  $  O$ is an alphabet, the set of observations;
    \item  $\trans: (S \cup S\times A)  \times (S \cup S\times A) \rightarrow[0,1]$ is the mapping defining the probability of each transition. The following constraints are satisfied: For $s\in S$, 
    \[
\trans (s, (s,a))= \pi_0(s, a );
    \]
    For $(s,a)\in S\times A$, 
    \[
\trans( (s,a),s' ) =  P( s,a, s');
    \]
   \item $\emission: (S \cup S\times A) \times O \rightarrow [0,1]$ is the mapping defining the emission probability of observation at a state that satisfies the following constraints:
    \[
    \emission(  s,o )  =\obs(o\mid s),
    \]
    and 
    \[
    \emission(  (s,a),o ) = \obs(o\mid s,a).
    \]
 \end{itemize}    
\end{definition}
It can be validated that the HMM is well-defined. 

The covert-planning problem is informally stated as follows.
\begin{definition}
    Given an \ac{mdp}  $M$ and a nominal behavior modeled as an HMM $M_0$, compute a policy $\pi$ for P1 that maximizes P1's total discounted reward, while ensuring, with a high probability $1-\alpha$ for $\alpha \in (0,1)$, P2 cannot detect P1's deviation from the nominal behavior $M_0$.
\end{definition}

% Depending on the upper bound for the probability of detection, we analyze two cases: In the first case, the detection probability is upper bounded by a number $\alpha \in (0,1)$. The corresponding policy is called $1-\alpha$-covert policy.
% In the second case, the detection probability is upper bounded by zero and the corresponding policy of P1 is called \emph{almost-surely} covert policy.

 \section{Main Results}
\label{sec: main result}
First, we show that a finite-memory policy can be more powerful than a Markov policy for being $(1-\alpha)$-covert. % \sh{Let's discuss the theorem during the meeting. Conditions on $\alpha$ and the performance of policies should be part of the statement.}
% \jf{I have included $\alpha$ into the theorem.}

\begin{theorem}
Given a \ac{mdp} $M$, P2's observation function $\obs$ and reward function $R$,  a \ac{hmm} $M_0$ modeling P2's observations given the nominal, Markovian policy $\pi_0$, and a parameter $\alpha \in (0,1)$, there may exist an  $(1-\alpha)$-covert finite-memory optimal policy that maximizes the total discounted reward but no $(1-\alpha)$-covert Markovian policy that can attain the same value of that finite-memory policy for the reward function.
    \end{theorem}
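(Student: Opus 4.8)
The statement is existential, so the plan is to exhibit one concrete instance $(M,\obs,R,M_0,\alpha)$ on which the gap provably occurs, and then verify two things: that some finite-memory policy is $(1-\alpha)$-covert and attains the largest value achievable by any $(1-\alpha)$-covert policy, and that no $(1-\alpha)$-covert Markovian policy reaches that value. The conceptual reason to expect such an instance is that the covert constraint is not a classical constrained-MDP constraint: an expected-discounted-cost constraint is a linear functional of the state--action occupancy, and for such constraints stationary policies are as powerful as history-dependent ones, whereas the constraint produced by the sequential likelihood-ratio test is a nonlinear, path-dependent functional of the trajectory distribution --- the observer's running log-likelihood ratio between the HMM induced by $\pi$ and $M_0$ is a sum of per-step increments, and whether that sum ever crosses the detection threshold, and with what probability, depends on the \emph{temporal pattern} of the emitted observations, not merely on their frequencies. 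Memory lets P1 shape that pattern; a time-invariant Markovian policy cannot.

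For the witness I would take a small MDP with a ``home'' state that is revisited indefinitely (mirroring the patrolling scenario), at which P1 chooses between a rewarding ``task'' action and an innocuous ``idle'' action; the nominal user $\pi_0$ almost always idles, so executing the task action is conspicuous and, because observations are noisy (hence have likelihoods bounded away from zero), each execution injects a \emph{bounded} positive increment into the observer's log-likelihood ratio. A Markovian policy is time-invariant, so it executes the task action with some fixed probability at every visit: if that probability exceeds the nominal one, the log-likelihood ratio has positive drift, crosses any fixed threshold eventually, and the detection probability exceeds $\alpha$; if it equals the nominal probability, the policy's value barely exceeds the nominal value. A finite-memory policy, by contrast, can keep a bounded counter and execute the task action only on the first few visits --- precisely where the discount weight is largest --- reverting to the nominal idle behavior afterwards; the accumulated log-likelihood ratio then stays bounded, so the detection probability stays below $\alpha$ for the threshold induced by $\alpha$, while the policy captures strictly more discounted reward than any covert Markovian policy. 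I would choose the reward magnitudes, the nominal idle probability, the observation noise, and $\alpha$ so that at least one such ``burst-then-idle'' policy is feasible and, among all covert policies, value-maximal (by an interchange argument: moving a task execution earlier never decreases discounted reward and, with the chosen parameters, never increases the detection probability), so it is genuinely optimal under the covert constraint rather than merely better than Markovian policies.

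Concretely the steps are: (1) fix $M$, $\obs$, and $M_0$; (2) write the observer's log-likelihood-ratio process in closed form, note that it is bounded along any trajectory of a finite-burst policy but grows linearly in expectation under any fixed-rate policy, and calibrate the detection threshold from $\alpha$ using that the exponentiated log-likelihood ratio is a nonnegative $M_0$-martingale of mean one; (3) characterize exactly the set of $(1-\alpha)$-covert Markovian policies and compute the supremum of their values in closed form; (4) exhibit the burst-then-idle finite-memory policy, verify its covertness via step (2), compute its value, and check that it exceeds the Markovian supremum; (5) confirm its optimality among all covert policies via the interchange argument. The main obstacle is the quantitative tuning linking steps (3) and (4): the parameters must be set so that the Markovian feasible set contains \emph{only} near-nominal policies --- no randomized Markovian policy should be able to smuggle extra reward past the test --- while simultaneously the finite-memory burst policy clears the threshold with margin. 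This is a balancing act among the discount factor, the per-execution likelihood increment, the detection threshold determined by $\alpha$, and the horizon over which the observer accumulates evidence; the remaining arguments are bookkeeping.
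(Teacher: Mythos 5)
Your conceptual diagnosis is right and matches the paper's: the covert constraint is a path-dependent functional of the trajectory distribution, not a linear occupancy constraint, so the usual sufficiency of stationary policies in constrained MDPs fails, and the way to prove the theorem is to exhibit one explicit witness instance. The paper does exactly this, but with a much smaller and fully computable example: a two-state MDP with actions $H,T$, a fair-coin nominal policy, an observer who sees the actions themselves (noiselessly), a horizon of $N=2$ actions, and a covert constraint of the form $\Pr(\kappa\le 1)\ge 1-\rho$ on the number of $H$'s; there the optimal covert Markovian value is a one-parameter calculus exercise ($\alpha^2\le\rho$, value $\alpha^2+\alpha+(1-\alpha)\beta/2$) and the ``$H$ first, then $H$ with probability $\rho$'' finite-memory policy beats it by an explicit margin, after which the case $N>2$ is handled by grafting this two-step argument onto the tail of any covert Markovian policy.

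The gap in your proposal is that it remains a plan: the theorem is an existence statement, so it needs a concrete instance with the verification actually carried out, and the steps you defer are precisely the hard ones in the setting you chose. In an indefinite-horizon SPRT model with noisy observations, step (3) --- characterizing \emph{exactly} which Markovian policies are $(1-\alpha)$-covert and computing the supremum of their values --- is delicate: your claim that any fixed task rate above nominal ``crosses any fixed threshold eventually, so the detection probability exceeds $\alpha$'' depends on how the detection event is defined (finite observation window versus stopping time), and a Markovian policy could also deviate slightly in other states or at a rate whose drift is too small to trigger detection within the relevant horizon; none of this is pinned down. Likewise step (4) needs the threshold $\epsilon$ calibrated from $\alpha$ and the burst policy's accumulated log-likelihood ratio bounded \emph{below} that threshold with the required probability, which interacts with the same unspecified horizon, and the interchange argument for optimality among all covert policies is only asserted. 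These are not mere bookkeeping; they are where the proof lives. The fix is either to carry out the quantitative tuning you describe for a fully specified detection rule, or (simpler, and what the paper does) to shrink the witness to a two-step, noiseless-action-observation example where both the covert Markovian optimum and the finite-memory value are computed in closed form.
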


\begin{proof}

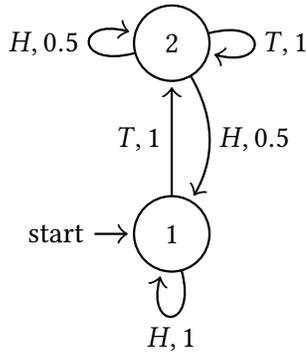
\begin{figure}[htbp]
	\centering
 \resizebox{0.5\linewidth}{!}{\begin{tikzpicture}[shorten >=1pt,node distance=2cm,on grid,auto, scale=0.6,transform shape]

  \node[state,initial] (1) {$1$};
  \node[state] (2) [above of=1] {$2$};
  \path [->](1) edge   [loop below]          node {$H, 1$} (1)
        (1) edge              node {$T, 1$} (2)
        (2) edge   [bend left]           node {$H, 0.5$} (1)
        (2) edge   [loop left]           node {$H, 0.5$} (2)
        (2) edge  [loop right]            node {$T, 1$} (2);
\end{tikzpicture}}

	\caption{The \ac{mdp} example}
	\label{fig:mdp}
\end{figure}
    We prove this theorem by constructing such a case. Consider a simple   \ac{mdp} with $S= \{1,2\}$, $A = \{H, T\}$ and stochastic transitions $P(1|1,H)=1$, $P(2|1, T)=1$, $P(1|2,H)=0.5$, $P(2|2, H) = 0.5$ and $P(2|2,T)=1$, illustrated in Fig.~\ref{fig:mdp}.

An agent is allowed to take $N$ actions and then stop. The nominal hypothesis is that at each state, the user selects ``$H$'' or ``$T$'' with equal probabilities.  The observer can observe the actions, but not the states. It is not hard to see that the hidden Markov model (for observations) induced by the nominal policy describes the sequence of outcomes for flipping a fair coin for $N$ times, in which the number of heads is a random variable that follows a binomial distribution.
The reward function for the planning agent is $R(1, H, 1) = 1$, $R(2, H, 1) = 1$, and for other (state, action, state) pairs, the reward is $0$. Therefore, if the agent is to maximize the total reward without covertness, it would select only action $H$ at both states.

Consider the observer employing hypothesis testing: With a confidence level $\alpha$ and total $N$ of actions, the observer would reject the nominal hypothesis if either the number of heads or the number of tails is greater than or equals an integer $K$. For example, if $\alpha=0.05$, $N=10$, then the null hypothesis is rejected if the number of heads is less than 2 or greater than 8.

Let $\kappa$ be the random variable that represents the number of times ``$H$'' is observed. Assuming $N = 2$ (the extension for $N\ge 2$ is discussed later), if the observer rejects the null hypothesis for observing more than one ``$H$'', then the covert constraint can be described as $Pr(\kappa \le 1) \ge 1-\rho$, where $\rho \in (0, 1)$ is a given threshold.

First, we consider the initial state to be $1$ and a Markovian policy $\pi_2$ defined by $\pi_2(H|1) = \alpha$, and $\pi_2(H|2) = \beta$ for $\alpha,\beta \in [0,1]$. Considering all trajectories generated by taking $2$ actions, which are $\{``111", ``112", ``121", ``122"\}$, the corresponding probabilities of generating the trajectories are $\alpha^2, \alpha(1-\alpha), (1-\alpha)\frac{\beta}{2}, (1-\alpha)(1-\beta) + (1-\alpha)\frac{\beta}{2}$ respectively. In order to find a Markovian policy that satisfies the covert constraint, we have $\alpha^2 \le \rho$. The corresponding value of the Markovian policy is $2\alpha^2 + \alpha(1-\alpha) + (1-\alpha)\frac{\beta}{2} = \alpha^2 + \alpha + (1-\alpha)\frac{\beta}{2}$. This value is smaller or equal to $\alpha^2 + \frac{\alpha + 1}{2}$, the equality holds when $\beta = 1$.

Next, we consider a finite-memory policy $\pi_2^\dagger$, where the agent takes action ``$H$'' at step $1$, and takes action ``$H$'' with probability $\rho$, action ``$T$'' with probability $1-\rho$ at step $2$. Note that this policy cannot be represented by a Markov policy. It is clear the finite-memory policy satisfies the covert constraint. And the corresponding reward is $1 + \rho$ in two steps.

Let $r_1$ denote the optimal Markovian policy's total reward, which is attained when $\beta=1$, and $r_2$ denote the finite-memory policy's total reward. Then  the following relation can be established: 
\begin{align}
   & r_2 - r_1 \\
   = & 1 + \rho -\alpha^2 - \frac{\alpha + 1}{2}\\
   \ge  & 1 - \frac{\alpha + 1}{2}\\
   \ge & \frac{1}{2} (1- \sqrt{\rho})
\end{align}
where the inequality is due to the covert constraint $\alpha^2 \le \rho$ and thus 
$- \alpha \ge -\sqrt{\rho}$. Since $\rho \in(0, 1)$, we have $r_2 - r_1 > 0$, which means the finite-memory covert policy performs better than the optimal Markovian covert policy when we take two actions. 

Next, we extend the case to $N$ actions for $N > 2$.  Using the hypothesis testing for binomial distribution, the observer would reject the nominal hypothesis if either the number of heads or that of tails exceeds an integer $K$. 
First, let $\pi_n$ denote the optimal Markovian policy that the agent can follow while satisfying the covert constraint. Let's construct a new policy $\pi'_n$ such that for  steps $1,2,\ldots, N-2$, $\pi'_n$ is exactly the same as $\pi_n$. At step $N-1$, if the agent is at state $2$, then the agent keeps following policy $\pi_n$; if the agent is at state $1$ and the number of ``H''  taken by the policy $\pi_n$ till this step is $X$, the decision for the next two actions will consider two different cases: 1) If $X+2 < K$, then the covert constraint allows two more ``H'' actions and the agent follows a finite-memory policy that takes two ``$H$'' actions. In this case,  any Markovian policy will have a reward at most the same as the finite-memory policy for the last two steps. 2) If $X+2\ge K$ and $X<K$, then the covert constraint allows only $0$ or $1$ ``$H$'' actions, then it is the case we discussed previously when $N = 2$, the best Markovian policy the agent can follow is $\pi_2$ at the last $2$ steps. However, the total reward obtained by following $\pi_2$   is smaller than the reward obtained by following  the finite-memory policy $\pi_2^\dagger$ given $N = 2$. The newly constructed $\pi_n'$ that commits to some finite memory policy in the last two steps is non-Markovian and satisfies the covert constraint. It also attains a better value than the optimal Markovian policy $\pi_n$ that satisfies the covert constraint. 
Based on this constructed example,  we conclude that a finite-memory policy can perform better than a Markovian policy when a covert constraint is enforced.
\end{proof}

 Despite the theorem showing that finite-memory policy is more powerful, it is intractable to compute because both the structure (memory states and transitions) and the mapping from the memory states to distributions over actions are unknown. Thus, in the next, we restrict our solution to Markovian policy space.

 \subsection{Computing a Covert Markovian Policy}

First, we review the sequential likelihood ratio test \cite{fuhSPRTCUSUMHidden2003}   for hypothesis testing of hidden Markov models. Based on the hypothesis testing methods, we then formalize the notion of $(1-\alpha)$-covert policy.

 The following notations are introduced: 
 We introduce a set of parameterized Markovian policies $\{\pi_\theta\mid \theta\in \Theta\}$ where $\Theta$ is a finite-dimensional parameter space. 
 For any Markovian policy  $\pi_\theta$ parameterized by  $\theta$, the Markov chain induced by $\pi_\theta $ from the \ac{mdp} $M$ is denoted $M_\theta \colon \{X_t, A_t, t\ge 0\}$ where $X_t$ is the random variable for the $t$-th state and $A_t$ is the random variable for the $t$-th action. For a   run $x =s_0a_0s_1a_1\ldots s_n$, $P(x; M_\theta)$ is the probability of the run $x$ in the Markov chain $M_\theta$.  We denote the distribution over a run by a random vector $X$ with support $\mathcal{X}$. Each sample of $X$ is a finite run $x$. 
% \todo[inline]{I think we need to enforce finite-step termination.}  
\paragraph{Softmax Parameterization}
A natural class of policies is parameterized by the softmax function,
\begin{align}
    \pi_{\theta}(a|s) = \frac{\exp(\theta_{s,a})}{\sum_{a' \in A}\exp(\theta_{s, a'})}.
\end{align}
The softmax has good analytical properties including completeness and differentiability. It can represent any stochastic policy. In this work, we consider the policy to be in the softmax parameterization form.

%\todo[inline]{need to enforce finite-time termination}

 The observation distribution given policy $\pi_\theta$ is modeled by the \ac{hmm} that can be constructed similar to $M_0$ in Def.~\ref{def:hmm0}. We denote the observation as a random vector $Y$ with a support $\mathcal{Y}$. Each sample of $Y$ is a finite observation sequence $y = o_0o_1\ldots o_n$. The probability of observing $y$ given the policy $\pi_\theta$ is denoted $P(y;M_\theta)$. Likewise, $P(x;M_0)$(resp. $P(y; M_0)$) is the probability of a run $x$ (resp. an observation $y$) in the nominal model $M_0$,

Let's consider the case with an \emph{informed} defender who has access to the policy $\pi_\theta$ (the alternative hypothesis).  Under the Markov policy $\pi$, the stochastic process $\{O_i,i \ge 0\}$ is a hidden Markov model where $O_i$ is the random variable representing the observation at the $i$-th time step.  Let $Y_{1:n} \coloneqq O_0,O_1,O_2,\ldots, O_n$   be a  sequence of random variables for the  finite  observations with the unknown model $M$. 
The likelihood ratio is defined as:
\begin{align*}
S_n & \coloneqq \frac{P(Y_{1:n}; M_\theta)}{P(Y_{1:n};,\ldots, O_n; M_0)}.
\end{align*} 

The sequential probability ratio test (SPRT) of $M = M_0$ versus $M = M_\theta$ stops sampling at  the stage 
\[ 
T \coloneqq\inf\{n: \log S_n \le \epsilon \text{ or }  \log S_n \ge \beta\},
\]
where $\epsilon$ and $\beta$ are two parameters defined by SPRT and $\epsilon < \beta$. The test accepts the null hypothesis that $M= M_0$ if $\log S_T\le \epsilon$, and  accepts   the alternative hypothesis $M= M_\theta$ if $\log S_T\ge \beta$. Here, the thresholds $\epsilon, \beta$ are determined based on the bounds on  Type I and Type II errors in SPRT for hidden Markov models \cite{fuhSPRTCUSUMHidden2003}.

Thus,  given a single observation sequence $y$, the SPRT cannot reject the null hypothesis  if 
\[ \log \frac{P(y; M_\theta)}{P(y; M_0)} \le \epsilon. \]

Because each observation $y$ is generated with probability $P(y; M_0)$,  the following constraint enforces,  with a probability   greater than $\alpha$, that the null hypothesis not rejected: 
\[
\Pr(\log \frac{P(Y; M_\theta)}{P(Y; M_0)} > \epsilon; M_\theta) \le \alpha,
\]
where $\Pr(E ; M_\theta)$ is the probability of event $E$ in the hidden Markov model $M_\theta$.   For convenience, we refer the term $
\Pr(\log \frac{P(Y; M_\theta)}{P(Y; M_0)} > \epsilon; M_\theta)$ as the \emph{detection probability} and $\log \frac{P(Y; M_\theta)}{P(Y; M_0)} > \epsilon$ as the detection condition.

\begin{definition}
\label{def:covert-policy}
Given a small constant $\alpha \in [0,1]$, a   policy $\pi_\theta$ is $(1-\alpha)$-covert if it is the solution to the following problem.

\begin{alignat}{2}
    & \optmax_{\pi_\theta} &&\quad V^{\pi_\theta} (s_0 ) \label{eq:alpha-covert-obj}
\\
 & \optsts &&  \Pr \left( \log\frac{P(Y;M_\theta)}{P(Y;  M_0)}   
>  \epsilon    ; M_\theta \right) \le \alpha, 
\end{alignat}
where $M_\theta$ is the \ac{hmm} induced by policy $\pi_\theta $ given the \ac{mdp} $M$.
\end{definition} 
\begin{remark}
In Def.~\ref{def:covert-policy}, we conservatively assume an informed observer who has access to the agent's policy. In reality, the observer is not informed and his detection performance can be worse than the informed observer. The assumption of an informed observer is also common in minimal information-leakage communication channel design to ensure strong information security and privacy \cite{khouzaniLeakageMinimalDesignUniversality2017}. In our case, this assumption provides a strong guarantee for covertness.
\end{remark}

\subsection{Primal-dual Proximal Policy Gradient for  Covert Optimal Planning}
By the method of Lagrange multipliers \cite{bertsekas2014constrained}, we can formulate the problem in \eqref{eq:alpha-covert-obj} into an unconstrained max-min problem. For our problem, the Lagrangian function is given by
$$
 L(\theta,\lambda) = V(s_0, \theta) + \lambda \left(\alpha - \Pr \left( \log\frac{P(Y;M_\theta)}{P(Y;  M_0)}   
 >  \epsilon   ; M_\theta \right)\right),
$$
where $\theta$ is the primal variable and $\lambda \ge 0$ is the dual variable. Here, we replace the policy $\pi_\theta$ with the policy parameter $\theta$ for clarity and also rewrite $V^{\pi_\theta}(s_0)$ as $V(s_0, \theta)$.    The original constrained optimization problem in~\eqref{eq:alpha-covert-obj} can be reformulated as
$$
\optmax_{\theta}\quad \min_{\lambda \ge 0 } L(\theta, \lambda).
$$
% Because  the value function can be a non-convex function with respect to the policy parameters, we present a direct policy search method aiming to compute a (local) optimal policy that satisfies the constraint. 
 % given a policy approximated by a neuron network can be nonconvex, and 
% the constraint set  is nonconvex.  This makes this problem particularly challenging to solve.
%
%  \begin{lemma}
% The set $\{\theta \in \Theta \mid  \alpha - \Pr \left( \log\frac{P(Y;M_\theta)}{P(Y;  M_0)}   
%  - \epsilon \ge 0 ; M_\theta \right) \ge 0  \}$  is nonconvex.
%  \end{lemma}
%  \begin{proof}
%  \end{proof}
%  \jf{I think this may be shown convex for special classes of observation functions and special class of function parameterization, for example, tabular $\theta$. I am not sure about it hold in general cases.}
%
%

Because  the value function can be a non-convex function of the policy parameters, we present a primal-dual gradient-based policy search method aiming to compute a (locally) optimal policy that satisfies the constraint. This method uses two-time-scale updates for the primal and dual variables. The primal variable $\theta$ is updated at a faster time scale, while the dual variable $\lambda$ is updated at a slower time scale (one update after several primal updates).

When performing the gradient computation of $L$ with respect to $\theta$,   it is observed that the constraint involves taking expectation over a distribution that depends on the decision variable $\theta$. To mitigate the distribution shift \cite{drusvyatskiy2023stochastic} in stochastic optimization, we then introduce a proximal policy gradient to bound the distribution shift between two updates on the primal variable.
That is, 
   at the $t$-th iteration, when computing the gradient of $ L(\theta, \lambda)$ with respect to $\theta$, we include a KL-divergence between the trajectory distribution $P_{\theta_t}$ under the current policy parameterized by $\theta_t$ and the trajectory distribution $P_{\theta }$ under the new policy parameterized by $\theta$, similar to the proximal policy optimization method \cite{schulman2017proximal}. After including the KL-divergence, the Lagrangian function becomes 
\begin{multline*}
 L^\beta (\theta,\lambda) = V(s_0, \theta) + \lambda \left(\alpha - \Pr \left( \log\frac{P(Y;M_\theta)}{P(Y;  M_0)}    
 - \epsilon > 0 ; M_\theta \right) \right) \\ - \beta D_{KL}(P_{\theta_t} ||P_{\theta}),
\end{multline*}
where $\beta >0$ is  a scaling parameter that can be dynamically adjusted and \[ D_{KL}(P_{\theta_t} ||P_{\theta}) = \Expect_{x \sim P_{\theta_t}}\left(\log\frac{ P_{\theta}(x)}{P_{\theta_t}(x)} 
\right) = \sum_{x\in \mathcal{X}} P_{\theta_t}(x)\log(\frac{ P_{\theta}(x)}{P_{\theta_t}(x)}),\] 
where $P_\theta(x)$ (resp. $P_{\theta_t}$) is the probability of the run $x$ in the Markov chain $M_\theta$ (resp. $M_{\theta_t}$).

%\todo[inline]{The following text is problematic, need to adjust the length. }
Taking the gradient of $L^\beta(\theta, \lambda)$ with respect to $\theta$,  note that $\Pr \left( \log\frac{P(Y;M_\theta)}{P(Y;  M_0)}   
>  \epsilon    ; M_\theta \right) = \Expect_{y\sim M_\theta} \left[ 
 \mathbf{1}( \log\frac{P( y; M_\theta)}{P(  y; M_0)}   
 - \epsilon > 0 ) \right], $ where $\mathbf{1}(E)$ is the indicator function that evaluates to one if $E$ is true and zero otherwise. We have
\begin{multline*}
\nabla_\theta L^\beta (\theta,\lambda) = \lambda \nabla_\theta \left(\alpha - \Expect_{y\sim M_\theta} \left[ 
\mathbf{1}( \log\frac{P(Y= y; M_\theta)}{P(Y = y;  M_0)}   
 - \epsilon > 0 ) \right] \right) \\ + \nabla_\theta V(s_0, \theta) 
- \beta \nabla_\theta D_{KL}(P_{\theta_t} ||P_\theta).
\end{multline*}

We approximate the gradient using samples generated from the current chain $M_{\theta_t}$.
%\[\nabla_\theta V(s_0, \theta) = \int \nabla_\theta \pi_{\theta}(y)R(y)dy = \int \pi_{\theta}(y) \nabla_\theta \log(\pi_\theta(y))R(y)dy = \Expect_{y \sim \pi_{\theta}}\left[ \nabla_\theta \log(\pi_\theta(y))R(y)\right].\] 
Each sample is a pair $(x_i, y_i)$ that includes: 1) a run  $x_i=s_{i,0}a_{i,0}s_{i,1}a_{i,1}\ldots s_{i,T},$
where $T $ is the length of state sequence, and 2) an observation $y_i$,   sampled from the probability distribution $P(Y\mid x_i, M_{\theta_t}) $.  For each $x_i \in \mathcal{X}_N$, let $R(x_i) =  \sum_{t=1}^T {\gamma}^t R(s_{i,t}, a_{i, t})$ be the total discounted rewards accumulated with the run $x_i$.

 First, we compute the gradient of the value function with respect to the policy parameter $\theta$, 
\begin{align}
    \nabla_\theta V(s_0, \theta) & = 
 \nabla_\theta    \sum_{x\in \mathcal{X}}  P_{\theta }(x)  R(x) \\
    & =  \sum_{x\in \mathcal{X}} \nabla_\theta \left(P_{\theta_t}(x)\frac{P_{\theta}(x)}{P_{\theta_t}(x)}\right)R(x) \\
    & = \sum_{x\in \mathcal{X}} P_{\theta_t}(x) \frac{\nabla_{\theta}P_{\theta}(x)}{P_{\theta_t}(x)}R(x) \\
    & = \sum_{x\in \mathcal{X}} P_{\theta_t}(x) \frac{P_{\theta}(x)}{P_{\theta_t}(x)}\nabla_{\theta}\log (P_{\theta}(x))R(x) \\
    & \approx \frac{1}{N}\sum_{x_i \in \mathcal{X}_N}\frac{P_{\theta}(x_i)}{P_{\theta_t}(x_i)}\left(\sum_{t=1}^T\nabla_\theta \log\pi(a_{i,t}|s_{i,t})\right)R(x_i),
\end{align}
where $\mathcal{X}_N =\{x_i, i=1,\ldots, N\} \subseteq \mathcal{X}$ be a sample of $N$ finite-length runs.   Because the samples are obtained with policy $\pi_{\theta_t}$, we use importance weighting in the second step. In the last step, we approximate the gradient using the $N$ sampled trajectories.

The third term is the KL-divergence. Taking derivate with respect to $\theta$, we have:
\begin{align}
\nabla_\theta D_{KL}(P_{\theta_t} ||P_{\theta}) & =\sum_{x\in \mathcal{X}}\left( \nabla_\theta P_{\theta_t}(x)\log\frac{ P_{\theta_t}(x)}{P_{\theta}(x)}\right) \\ 
& = -\sum_{x \in \mathcal{X}} P_{\theta_t}(x) \nabla_\theta \log P_{\theta}(x)\\
& = -\Expect_{x \sim P_{\theta_t}}\left[\nabla_\theta\log(P_{\theta}(x)) \right] \\
& \approx -\frac{1}{N}\sum_{x_i\in \mathcal{X}_N}\left(\sum_{t=1}^T\nabla_\theta \log\pi_\theta (a_{i,t}|s_{i,t})\right).
\end{align}

 For the first term, we need to compute the derivative of the constraint with respect to the policy parameter. 
\begin{align}
     & \nabla_\theta \left(\alpha - \Expect_{y\sim M_\theta} \left[ 
\mathbf{1}( \log\frac{P( y; M_\theta)}{P(  y; M_0)}   
 - \epsilon > 0 ) \right] \right) \\
 = & -\nabla_{\theta} \sum_y P(  y; M_\theta)\left[ \mathbf{1}(\log\frac{P(  y; M_\theta)}{P(  y; M_0)} - \epsilon > 0)\right] \\
 = &-\sum_y \nabla_{\theta} P(  y; M_\theta)\left[ \mathbf{1}(\log\frac{P( y; M_\theta)}{P(  y; M_0)} - \epsilon > 0)\right] \\
 = & -\sum_{y \in U} \nabla_{\theta} P(  y; M_\theta) \\
 = & -\sum_{y \in U} \nabla_{\theta} \sum_{x \in \mathcal{X}} P(y | x)P_{\theta}(x) \\
 = & -\sum_{y \in U}\sum_{x\in \mathcal{X}} P(y|x) \nabla_{\theta} P_{\theta}(x), \end{align}
 where $U = \{y\in \mathcal{Y} \mid  \log\frac{P( y; M_\theta)}{P( y; M_0)} - \epsilon > 0\}$ is a set of observation sequences at which the detection condition is met.

 Using the logarithm trick again, we have
\begin{align} 
 & \sum_{y \in U}\sum_{x\in \mathcal{X}} P(y|x) \nabla_{\theta} P_{\theta}(x)\\
= & \sum_{y \in U} \sum_{x\in \mathcal{X}} P(y|x) P_{\theta}(x)\nabla_{\theta}\log(P_{\theta}(x)) 
\label{eq:gradient-detection-step1}  \\
 = & \sum_{y \in U} \sum_{x\in \mathcal{X}} P_{\theta_t}(x) \frac{P_{\theta}(x)}{P_{\theta_t}(x)} P(y|x) \nabla_{\theta} \log P_{\theta}(x) \label{eq:gradient-detection-step2} \\
\approx &  \frac{1}{N} \sum_{x_i \in \mathcal{X}_N} \sum_{y \in U} \frac{P_{\theta}(x_i)}{P_{\theta_t}(x_i)}P(y|x_i) \nabla_{\theta}\log P_\theta(x_i). \label{eq:approx-1}
\end{align} 
The approximation requires the computation of $P(y|x_i)$, which is the probability of observation $y$ given the run $x_i$, for all $y \in U$. In practice, the set $U$ can be large and difficult to construct. We discuss two approaches to compute the approximation. 

One approach is to uniformly sample a subset of $U$, called $U_K$, and compute the gradient as 
\begin{align*}
\text{\eqref{eq:gradient-detection-step1}}   \approx  &  \frac{1}{N} \sum_{x_i \in \mathcal{X}_N} \sum_{y_k \in U_K} \frac{P_{\theta}(x_i)}{P_{\theta_t}(x_i)}P(y_k|x_i) \nabla_{\theta}\log P_\theta(x_i)  \\
= 
    & \frac{1}{N} \sum_{x_i \in \mathcal{X}_N} \sum_{y_k \in U_K} \frac{P_{\theta}(x_i)}{P_{\theta_t}(x_i)}P(y_k|x_i)  \nabla_{\theta} (\sum_{t=1}^T\nabla_\theta \log\pi_\theta (a_{i,t}|s_{i,t})).\end{align*}

Alternatively, it is noted that $P(y|x)P_{\theta_t}(x)$ is the joint probability $P_{\theta_t}(x,y)$.
\begin{align*}
  \text{\eqref{eq:gradient-detection-step2}} =  & \sum_{x\in \mathcal{X}}\sum_{y \in U} P_{\theta_t}(y,x) \cdot   \frac{P_\theta(x)}{P_{\theta_t}(x)} \cdot \nabla_{\theta} \log P_\theta(x) \\
\approx &\frac{1}{N} \sum_{x_i\in \mathcal{X}_N}\mathbf{1}(y_i \in U)   \frac{P_\theta(x_i)}{P_{\theta_t}(x_i)} \cdot \nabla_{\theta} \log P_\theta(x_i) \\
=  &\frac{1}{N} \sum_{x_i\in \mathcal{X}_N}\mathbf{1}(y_i \in U)   \frac{P_\theta(x_i)}{P_{\theta_t}(x_i)} \cdot \left(\sum_{t=1}^T\nabla_\theta \log\pi_\theta (a_{i,t}|s_{i,t})\right).
 \end{align*}
 That is, for each sampled trajectory-observation pair $(x_i, y_i)$ from the distribution $P_{\theta_t}$, we check if $\log \frac{P_\theta( y)}{P_0(y)} -\epsilon > 0$,  that is, using the distribution $P_\theta$ to determine if $y_i\in U $.

Finally, the gradient of the dual variable $\lambda$ is calculated as the following: 

\begin{align}
& \nabla_\lambda L^\beta(\theta, \lambda)\\
= & \alpha - \Pr \left( \log\frac{P(Y;M_\theta)}{P(Y;  M_0)}   
 - \epsilon > 0 , M_\theta \right) \\
 = &  \alpha - \mathbf{1}\left(\log\frac{P(Y;M_\theta)}{P(Y;  M_0)} >\epsilon\right) \\
 \approx & \alpha - \frac{1}{N} \sum_{y \in \bar U}\mathbf{1}\left(\log\frac{P( y; M^{\theta})}{P( y;  M_0)} >\epsilon\right) ,
\end{align}
where $\overline{U} = \{y_i \mid\log\frac{P( y_i; M_{\theta})}{P(  y_i;  M_0)} >  
 \epsilon\}$ is a set of \emph{sampled} observations at which the detection condition is met.

 Algorithm~\ref{alg:PI} summarizes the proposed Primal-Dual Covert Policy Gradient (Covert PG) method using two-time-scale updates.   Two-time-scale updates are a common technique used by first-order algorithms for solving minimax or maximin problems (see, e.g.,~\cite{nouiehed_solving_2019,thekumparampil_efficient_2019,lin_near-optimal_2020,jin_what_2020,fiez2021local}). The inner-loop updates are carried out more frequently in order to compute an approximately optimal solution to the inner-loop optimization problem, which can be subsequently used for computing the gradient for the outer-loop updates. The convergence analysis of the proposed primal-dual proximal policy gradient method is left as future work. While several convergence analyses of two-time-scale first-order methods exist, they do not directly apply to the problem studied in this paper due to different assumptions on the optimization problem such as strong convexity/concavity of the inner problem~\cite{lin_near-optimal_2020} or that the primal-dual solution forms a strict local Nash equilibrium~\cite{jin_what_2020,fiez2021local}. 
 
 In our numerical experiments, we chose to terminate the algorithm once the value of the Lagrangian did not change significantly between two consecutive iterations, which is determined by the choice of $\delta_0$. The parameter $d$ is the KL divergence target distance. If the KL divergence is significantly different from the target distance, $\beta$ quickly adjusts. Finally, $(x)^+ = 0$ if $x \le 0$, $(x)^+ = x$ if $x > 0$.

\begin{algorithm}[hbt!]
\caption{Primal-dual proximal policy gradient for covert optimal planning}\label{alg:PI}
\begin{algorithmic}[1]
\Procedure {Primal-dual proximal policy gradient}{$\lambda_1, \theta_1, \beta_1$}
\State $t \gets 1$
\State $L^\beta(\theta_0, \lambda_0) \gets \infty$
\State $\delta \gets \infty$
\While {$| \delta| \ge \delta_0$}
\For {\texttt{batches $b = 1, 2,...m$}}
    \State Generate trajectories $\mathcal{X}^t_b$ based on $\theta_t$
    \State Calculate gradient term $\nabla_\theta L^\beta (\theta,\lambda_t)$ given generated trajectories $\mathcal{X}_b$
    \State $\theta \gets \theta + \eta \nabla_\theta L^\beta (\theta,\lambda_t)$
\EndFor
\State Calculate $L^{\beta}(\theta, \lambda_t)$
\State Calculate $\nabla_\lambda L^\beta(\theta, \lambda_t)$ based on all trajectories $\cup_b\mathcal{X}^t_b$
\State  $\lambda_{t+1} \gets (\lambda_{t}- \kappa \nabla_\lambda L^\beta(\theta  , \lambda_t ))^+$ 
\State Calculate $D_{KL}(P_{\theta_t} ||P_{\theta})$
\If{$D_{KL}(P_{\theta_t} ||P_{\theta}) \le d/1.5$}
    \State $\beta \gets \beta/2$
\EndIf
\If{$D_{KL}(P_{\theta_t} ||P_{\theta}) \ge d \times 1.5$}
\State $\beta \gets \beta \times 2$
\EndIf
\State $\theta_{t+1} \gets \theta$
\State $\delta \gets | L^{\beta}(\theta, \lambda_t) - L^{\beta}(\theta, \lambda_{t-1}) |$
\State $t \gets t+1$
\EndWhile

\EndProcedure
\end{algorithmic}
\end{algorithm}

%\todo[inline]{convergence: does the algorithm coverges to the local NE?}

%\jf{do you mean $\mathcal{X}_b^t$ in line-5? also do we use $X$ or $\mathcal{X}$? check and make sure consistency.}
%\jf{can you also add a paragraph to justify the update for $\lambda_t$ and $\beta$?}

%\jf{introduce the softmax policy parameterization.}
%\hma{I have introduce the softmax parameterization after we introduce the notation in 3.1}
% \begin{assumption}
% Convergence assumption.
% \end{assumption}

% \jf{To show that the algorithm shall converge to a local optimal solution under the assumption of covergence.}
% \jf{check out \url{https://epubs.siam.org/doi/pdf/10.1137/S0363012997331639}}
% \begin{theorem}
% \end{theorem}
%  \jf{check also \cite{chowRiskConstrainedReinforcementLearning}}

% \subsection{Computing an Almost-Surely Covert Policy}
%  \jf{I moved the draft to almost-sure-case.tex}

%\input{almost-sure-case}
\section{Experiment validation}
\label{sec: experiment}

We demonstrate our solutions using a  gridworld planning environment,  depicted in Figure~\ref{fig:10X10 gridworld}. A state of the agent is denoted by $(\mbox{row}, \mbox{col})$. The adversary can move in one of four compass directions. When given the action ``N'', the adversary enters the intended cell with a $1 - 2\beta$ probability and enters the neighboring cells, which are the west and east cells, with probability $\beta$. Different values of stochasticity parameter $\beta$ are used in the experiment.
The purple areas are bouncing walls, which the agents can not enter. If the agent takes an action that crashes into the bouncing walls or boundaries, the agent will stay in his previous state. The black cells are penalty states, in which the agent receives a penalty if he enters the state.  The robot icon in the figure indicates the agent's initial state (for both the nominal user and the adversary). 

For both the adversary and the normal user, the cost of each action is $0.2$. They incur a cost of $2$ if they enter the black states and receive a reward of $20$ upon reaching their respective goals. The discounting factor is $0.95$. The normal user's goal is  $(7, 0)$ (marked by the star icon) and the adversary's goal is located at $(0, 0)$ (marked by the red circle). 
The normal user's policy is computed as the entropy-regulated optimal policy to maximize the total discounted reward with respect to the normal user's reward function. 

We consider the behavior of the agent is observed using a set of sensors, 
whose configuration and the covering areas are shown in Figure~\ref{fig: sensor config}. The sensor icon indicates the sensor's location and the yellow area depicts the sensor coverage. Each sensor possesses a default detection probability and can solely detect agents presented within its coverage. The detection probability decreases with the distance of the agent's location to the sensor's location.   Once a sensor detects an agent, the defender obtains an observation value of $1$ from that particular sensor. Consequently, our experimental setup results in the defender's observation taking the form $(o_1, o_2, o_3)$, with each $o_i$ corresponding to the observation from sensor $i$.

The detection probability for a given state is influenced by three key factors: the range of the sensor, the sensor's default detection probability ($0.8$ for all sensors), the distance between the current state and the sensor's location, and the surrounding environment (green states). 
% In our experiment, the default detection probability is $0.8$ for all sensors. 
For every unit of distance increase, the detection probability decreases by $0.05$. 
A dark (resp. light) green state further decreases the probability by $d_1= 0.2$ (resp. $d_2= 0.1$). For example, assume sensors one, two, and three are located at $(4, 0), (6, 4), (1, 4)$ respectively, if an agent is at state $(6,3)$, which is not in the range of sensor one and three, thus the observations received from sensor one and three are $0$. For sensor 2,   the probability of detecting the agent is $0.8-0.05 \times 1 - 0.2=0.55$. 

With this stochastic environment and sensor setup, we compute the optimal $(1-\alpha)$-covert policy given $\alpha=0.2$ and different system stochasticity parameters $\beta = 0.05, 0.1, 0.15$.  In the experiments, the learning rate of $\theta$ is $0.005$, and the learning rate of $\lambda$ is $0.01$. In each iteration, $800$ trajectories are generated. These trajectories are divided into $20$ batches to update $\theta$  in the primal-dual proximal policy gradient computation (Line 5-9 in Algorithm~\ref{alg:PI}). %\jf{optimal or entropy-regulated optimal?}

\begin{figure}[ht]
	\centering
	\includegraphics[width = 0.5\linewidth]{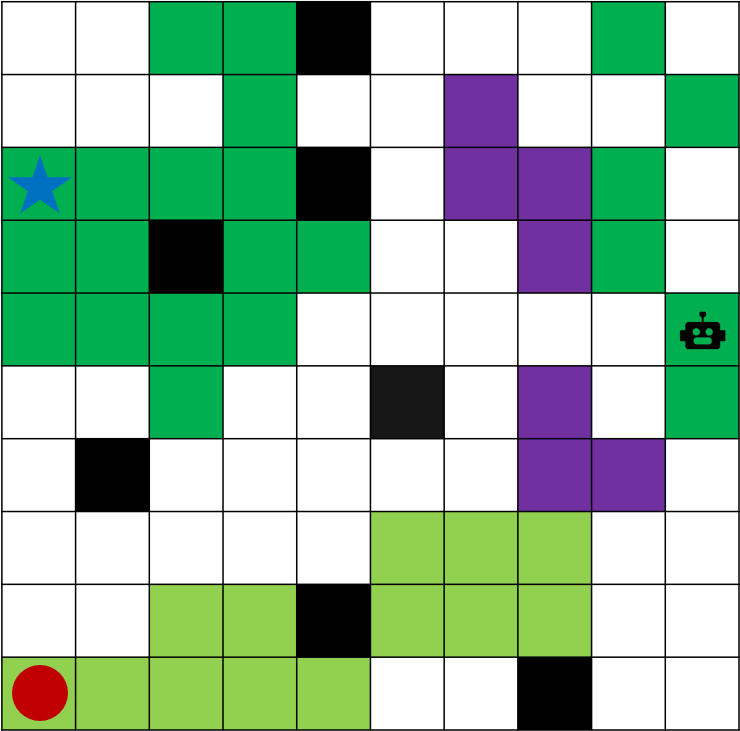}
	\caption{The 10 $\times$ 10 stochastic gridworld.}
	\label{fig:10X10 gridworld}
\end{figure}

\begin{figure}[ht]
	\centering
	\includegraphics[width = 0.5\linewidth]{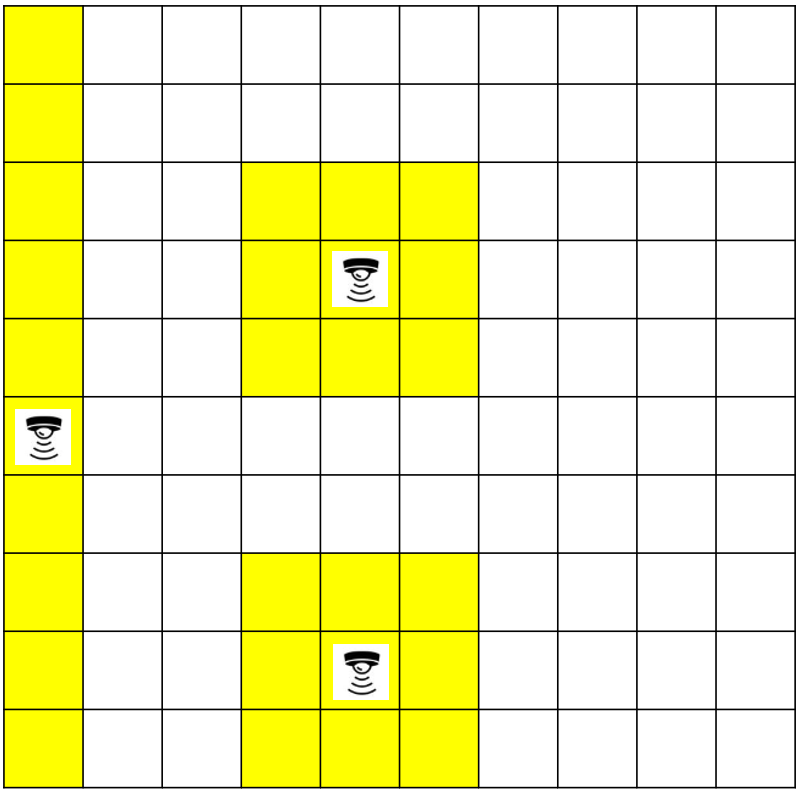}
	\caption{The Sensor Configuration.}
	\label{fig: sensor config}
\end{figure}

First, we compute the optimal policies without the covert constraint and evaluate the probability of detection of these policies. 
Given a system's stochasticity parameter $\beta = 0.05$, without the covert constraint, the adversary's deterministic optimal policy would attain a value of $6.76$ at a cost of being detected with a probability close to  $99\%$. Similarly, the adversary's value is $5.6$ and $4.2$ given $\beta = 0.1$,  and  $\beta = 0.15$ respectively if the adversary takes  deterministic optimal policies in these two environments. The detection probabilities are close to $99\%$ under both cases. If the adversary uses the softmax optimal policy, then the values are close to the deterministic optimal policies but the detection probabilities are approximately $73\%$, given $\beta = 0.05, 0.1, \text{and } 0.15$.

Then, we solve for  $0.8$-covert policy that ensures the probability of detection is no greater than $\alpha=20\%$.
We set the detection threshold $\epsilon$ to be $3$, following from $\chi^2$ distribution with 1 degree-of-freedom and confidence interval $92\%$.
It is conservatively selected to model a  detector that tolerates a relatively large false positive rate.  The performance of the algorithm is empirically analyzed from three values: Lagrangian function value, adversary's expected value, and detection probability, as shown in Figures ~\ref{fig: Lagrangian-converge-trend}, ~\ref{fig: expected-value-converge trend}, and ~\ref{fig: detection-convergence-trend}. We select the entropy-regulated optimal policy as the initial policy. Thus, the values of policies are highest upon initialization. As the policy is updated over iterations to enforce the covertness constraint, the values decrease. The algorithm terminates in all three cases. % because it gives a  initial value from the perspective of the adversary's expected value. 
 % todo cont. JF

 \begin{figure}[htbp]
	\centering
	\includegraphics[width = 1\linewidth]{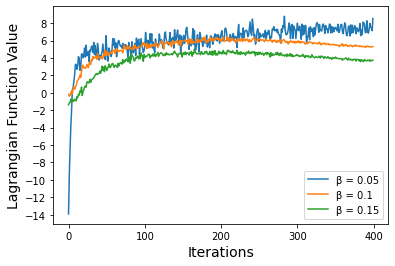}
	\caption{Lagrangian value over iterations given $\beta = 0.05, 0.1, 0.15$.}
	\label{fig: Lagrangian-converge-trend}
\end{figure}

 Figure~\ref{fig: Lagrangian-converge-trend} shows the Lagrangian value change over iterations given $\beta = 0.05, 0.1 \text{and } 0.15$. When $\beta = 0.1, \text{and } 0.15$, the initial value of $\lambda$ is $10$. The  values of the Lagrangian functions change initially increases and then decreases after 200 iterations. This is because the detection constraint is satisfied after 200 iterations and after that the value of $\lambda$ decreases to $0$, resulting in a decrease in the Lagrangian value.  
 It is observed that the initial value of Lagrangian for $\beta=0.05$ is smallest, close to $-14$. This is because we have picked the initial value of $\lambda$ to be $40$ for $\beta=0.05$ to achieve a faster termination.   % If we choose $\lambda=10$ instead of $40$ for the case $\beta=0.05$, the algorithm converges much slower. % The Lagrangian function converges fast due to the fast change in the detection probability. After $50$ iterations, the Lagrangian value changes slowly corresponding to the slow adaptation in the detection probability to satisfy the detection constraints. If the initial value of $\lambda = 10$ when $\beta = 0.05$, in the first $200$ iterations, the algorithm increases $\lambda$ to enhance the detection violation constraint influence.
 
% After $400$ iterations, the policy's detection probability is around $20\%$. So we manually increase the initial value of $\lambda$ to $40$, so that we have a better policy in $400$ iterations.  

 \begin{figure}[htbp]
	\centering
	\includegraphics[width = 1\linewidth]{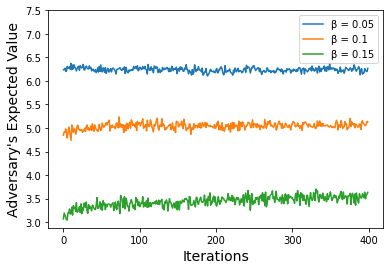}
	\caption{Expected value over iterations given $\beta = 0.05, 0.1, 0.15$.}
	\label{fig: expected-value-converge trend}
\end{figure}

From Figure~\ref{fig: expected-value-converge trend}, we conclude the adversary's expected values are $6.3, 5.1, \text{and } 3.47$, given $\beta = 0.05$, $0.1$, and $0.15$ respectively. Compare the adversary's covert policies' expected values to the adversary's optimal expected values without enforcing covertness constraint ($6.76$, $5.6$, and $4.2$), the decreases are $7.5\%, 9\%, \text{and } 18\%$, for $\beta = 0.05, 0.1, \text{and } 0.15$ respectively.  This shows that in this current environment setting, 
% This is due to we select an entropy-regulated softmax optimal policy without considering the detection constraint as the initial policy. Consequently, the expected value of the initial policy closely resembles that of the converged policy. During the convergence process, although the policy adapts to the detection constraint, the adversary's expected value given different policies changes slightly, which also proves under the current environment setting, 
the adversary can satisfy the detection constraint without too much loss in the task performance. 

\begin{figure}[htbp]
	\centering
	\includegraphics[width = 1\linewidth]{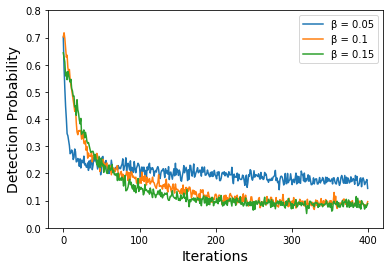}
	\caption{Detection probability over iterations given $\beta = 0.05, 0.1, 0.15$.}
	\label{fig: detection-convergence-trend}
\end{figure}

Figure~\ref{fig: detection-convergence-trend} depicts the change of the detection probabilities over iterations. The result shows all three initial policies have large detection probabilities, and the detection probabilities decrease quickly in the first $100$ iterations and eventually satisfy the detection constraint. The final detection probabilities are $0.168, 0.089, \text{and } 0.088$, given $\beta = 0.05, 0.1, \text{and } 0.15$ respectively. The policies obtained upon termination all satisfy the $0.8$-covertness.

To test the sensitivity of the policies concerning different system stochasticities, we evaluated the detection probabilities of the computed optimal policies for $\beta=0.05,0.1$ and $0.15$, under different levels of system stochasticity. The results are presented in Table~\ref{tab: detection given stochasticity}. We observe that if the policy is optimized for the system with a  stochasticity level $\beta$, then this policy can still ensure covertness for $\beta' \ge \beta$ (as shown in Boldface in the table). We hypothesize that the increased noise can aid the covertness of the policy. However, verifying this hypothesis requires further analysis and more general classes of \ac{mdp}s other than the gridworld dynamics. We leave this to future investigation.%It would be interesting to formally analyze the sensitivity of the policy with respect to stochasticity in the environment.

% we observe that for a fixed system stochasticity $\beta$, the detection probability of the optimal policy obtained from $\beta = 0.05$ is lower than that of the optimal policy obtained from $\beta = 0.1$, while the optimal policy obtained from $\beta = 0.15$ performs the worst in terms of detection probability. However, given a fixed policy $\pi^{\ast}$, the detection probability does not necessarily decrease given the increase of system stochasticity. Also the policis 

The experiments are conducted using Python on a Windows $10$ machine with Intel(R) Core (TM) i7-11700K CPU and 32~GB RAM. The average running time for $400$ iterations is approximately $24$ hours. The running time varies slightly with different system stochasticities due to differences in trajectory length.

%\jf{table into margin, need to be scaled.}
\begin{table}[htbp]
\resizebox{\columnwidth}{!}{%
\begin{tabular}{|l|l|l|l|}
\hline
$\beta$                    & 0.05  & 0.1   & 0.15  \\ \hline
$\pi^{\ast}$ given $\beta = 0.05$ & $\mathbf{0.168 \pm 0.011}$ & $\mathbf{0.077 \pm 0.006}$ & $\mathbf{0.083 \pm 0.007}$ \\ \hline
$\pi^{\ast}$ given $\beta = 0.1$  & $0.289 \pm 0.018$ & $\mathbf{0.089 \pm 0.012}$ & $\mathbf{0.087 \pm 0.009}$ \\ \hline
$\pi^{\ast}$ given $\beta = 0.15$ & $0.332 \pm 0.011$ & $0.093 \pm 0.009$ & $\mathbf{0.088 \pm 0.008}$ \\ \hline
\end{tabular}
}
\caption{The detection probabilities  obtained by evaluating different policies in different stochastic dynamics for the system.}
\label{tab: detection given stochasticity}
\end{table}

\section{Conclusion}
\label{sec: conclusion}
We study a class of covert planning problems against imperfect observers and develop a covert primal-dual gradient method that optimizes task performance given certain covertness constraints. Despite the proof that finite-memory policies can be more powerful than memoryless policies under the covertness constraint, our solution is limited to finding an optimal and covert Markov policy. Future work may investigate approaches to efficiently search finite-memory policy space under covert constraints.  By analyzing the covert policy, observer can consider whether it is possible to use the covert policy as a counterexample to improve the imperfect observer and eliminate such covert policies.  Additionally, through empirical analysis, one can investigate theoretically how the detection probability can be influenced by the stochasticity in the system and different levels of noise in the observations.

\section*{Acknowledgement}

Research was sponsored by the Army Research Office  under Grant
Number W911NF-22-1-0034 and the Army Research Laboratory   under Cooperative Agreement Number W911NF-22-2-0233. The views and conclusions
contained in this document are those of the authors and should not be interpreted as representing the official
policies, either expressed or implied, of the Army Research Laboratory or the U.S. Government. The U.S.
Government is authorized to reproduce and distribute reprints for Government purposes notwithstanding
any copyright notation herein.
%We demonstrate that our algorithm successfully maximizes the adversary's expected reward while satisfying the detection constraint, under different system stochasticities. Our approach is applicable to a wide range of stealthy planning tasks, where the adversary can construct the detection mode and observer mode based on available information. Future work could focus on developing more efficient algorithms or history-based online covert planning algorithms. Additionally,

 %\section{Experimental validation}
 %\todo[inline]{add experiment results into the section, try meltingpot simulation for MARL and   probabilistic surveillance strategy.}

\newpage
 \bibliographystyle{ACM-Reference-Format} 
\bibliography{sample}

%%% -*-BibTeX-*-
%%% Do NOT edit. File created by BibTeX with style
%%% ACM-Reference-Format-Journals [18-Jan-2012].

\begin{thebibliography}{22}

%%% ====================================================================
%%% NOTE TO THE USER: you can override these defaults by providing
%%% customized versions of any of these macros before the \bibliography
%%% command.  Each of them MUST provide its own final punctuation,
%%% except for \shownote{}, \showDOI{}, and \showURL{}.  The latter two
%%% do not use final punctuation, in order to avoid confusing it with
%%% the Web address.
%%%
%%% To suppress output of a particular field, define its macro to expand
%%% to an empty string, or better, \unskip, like this:
%%%
%%% \newcommand{\showDOI}[1]{\unskip}   % LaTeX syntax
%%%
%%% \def \showDOI #1{\unskip}           % plain TeX syntax
%%%
%%% ====================================================================

\ifx \showCODEN    \undefined \def \showCODEN     #1{\unskip}     \fi
\ifx \showDOI      \undefined \def \showDOI       #1{#1}\fi
\ifx \showISBNx    \undefined \def \showISBNx     #1{\unskip}     \fi
\ifx \showISBNxiii \undefined \def \showISBNxiii  #1{\unskip}     \fi
\ifx \showISSN     \undefined \def \showISSN      #1{\unskip}     \fi
\ifx \showLCCN     \undefined \def \showLCCN      #1{\unskip}     \fi
\ifx \shownote     \undefined \def \shownote      #1{#1}          \fi
\ifx \showarticletitle \undefined \def \showarticletitle #1{#1}   \fi
\ifx \showURL      \undefined \def \showURL       {\relax}        \fi
% The following commands are used for tagged output and should be
% invisible to TeX
\providecommand\bibfield[2]{#2}
\providecommand\bibinfo[2]{#2}
\providecommand\natexlab[1]{#1}
\providecommand\showeprint[2][]{arXiv:#2}

\bibitem[\protect\citeauthoryear{Al~Enezi and Verbrugge}{Al~Enezi and
  Verbrugge}{2022}]%
        {aleneziStealthyPathPlanning2022}
\bibfield{author}{\bibinfo{person}{Wael Al~Enezi} {and} \bibinfo{person}{Clark
  Verbrugge}.} \bibinfo{year}{2022}\natexlab{}.
\newblock \showarticletitle{Stealthy path planning against dynamic observers}.
  In \bibinfo{booktitle}{\emph{Proceedings of the 15th {ACM} {SIGGRAPH}
  {Conference} on {Motion}, {Interaction} and {Games}}}.
  \bibinfo{publisher}{ACM}, \bibinfo{address}{Guanajuato Mexico},
  \bibinfo{pages}{1--9}.
\newblock
\showISBNx{978-1-4503-9888-6}
\urldef\tempurl%
\url{https://doi.org/10.1145/3561975.3562948}
\showDOI{\tempurl}


\bibitem[\protect\citeauthoryear{Al~Marzouqi and Jarvis}{Al~Marzouqi and
  Jarvis}{2011}]%
        {al2011robotic}
\bibfield{author}{\bibinfo{person}{Mohamed Al~Marzouqi} {and}
  \bibinfo{person}{Ray~A Jarvis}.} \bibinfo{year}{2011}\natexlab{}.
\newblock \showarticletitle{Robotic covert path planning: A survey}. In
  \bibinfo{booktitle}{\emph{2011 IEEE 5th international conference on robotics,
  automation and mechatronics (RAM)}}. IEEE, \bibinfo{pages}{77--82}.
\newblock


\bibitem[\protect\citeauthoryear{Bertsekas}{Bertsekas}{2014}]%
        {bertsekas2014constrained}
\bibfield{author}{\bibinfo{person}{Dimitri~P Bertsekas}.}
  \bibinfo{year}{2014}\natexlab{}.
\newblock \bibinfo{booktitle}{\emph{Constrained optimization and Lagrange
  multiplier methods}}.
\newblock \bibinfo{publisher}{Academic press}.
\newblock


\bibitem[\protect\citeauthoryear{Bopardikar, Bullo, and Hespanha}{Bopardikar
  et~al\mbox{.}}{2008}]%
        {bopardikar2008discrete}
\bibfield{author}{\bibinfo{person}{Shaunak~D Bopardikar},
  \bibinfo{person}{Francesco Bullo}, {and} \bibinfo{person}{Joao~P Hespanha}.}
  \bibinfo{year}{2008}\natexlab{}.
\newblock \showarticletitle{On discrete-time pursuit-evasion games with sensing
  limitations}.
\newblock \bibinfo{journal}{\emph{IEEE Transactions on Robotics}}
  \bibinfo{volume}{24}, \bibinfo{number}{6} (\bibinfo{year}{2008}),
  \bibinfo{pages}{1429--1439}.
\newblock


\bibitem[\protect\citeauthoryear{Drusvyatskiy and Xiao}{Drusvyatskiy and
  Xiao}{2023}]%
        {drusvyatskiy2023stochastic}
\bibfield{author}{\bibinfo{person}{Dmitriy Drusvyatskiy} {and}
  \bibinfo{person}{Lin Xiao}.} \bibinfo{year}{2023}\natexlab{}.
\newblock \showarticletitle{Stochastic optimization with decision-dependent
  distributions}.
\newblock \bibinfo{journal}{\emph{Mathematics of Operations Research}}
  \bibinfo{volume}{48}, \bibinfo{number}{2} (\bibinfo{year}{2023}),
  \bibinfo{pages}{954--998}.
\newblock


\bibitem[\protect\citeauthoryear{Fiez and Ratliff}{Fiez and Ratliff}{2021}]%
        {fiez2021local}
\bibfield{author}{\bibinfo{person}{Tanner Fiez} {and}
  \bibinfo{person}{Lillian~J Ratliff}.} \bibinfo{year}{2021}\natexlab{}.
\newblock \showarticletitle{Local Convergence Analysis of Gradient Descent
  Ascent with Finite Timescale Separation}. In
  \bibinfo{booktitle}{\emph{International Conference on Learning
  Representations}}.
\newblock
\urldef\tempurl%
\url{https://openreview.net/forum?id=AWOSz_mMAPx}
\showURL{%
\tempurl}


\bibitem[\protect\citeauthoryear{Fuh}{Fuh}{2003}]%
        {fuhSPRTCUSUMHidden2003}
\bibfield{author}{\bibinfo{person}{Cheng-Der Fuh}.}
  \bibinfo{year}{2003}\natexlab{}.
\newblock \showarticletitle{{SPRT} and {CUSUM} in {Hidden} {Markov} {Models}}.
\newblock \bibinfo{journal}{\emph{The Annals of Statistics}}
  \bibinfo{volume}{31}, \bibinfo{number}{3} (\bibinfo{year}{2003}),
  \bibinfo{pages}{942--977}.
\newblock
\showISSN{0090-5364}
\urldef\tempurl%
\url{https://www.jstor.org/stable/3448426}
\showURL{%
\tempurl}
\newblock
\shownote{Publisher: Institute of Mathematical Statistics.}


\bibitem[\protect\citeauthoryear{Gabi~Goobar and S{\"o}derberg}{Gabi~Goobar and
  S{\"o}derberg}{2021}]%
        {GabiGoobar1635600}
\bibfield{author}{\bibinfo{person}{Tobias Gabi~Goobar} {and}
  \bibinfo{person}{Samuel S{\"o}derberg}.} \bibinfo{year}{2021}\natexlab{}.
\newblock \bibinfo{title}{Knowledge Based Strategies in Grid-Based
  Pursuit-Evasion Games of Imperfect Information}.
\newblock , \bibinfo{numpages}{609-622}~pages.
\newblock


\bibitem[\protect\citeauthoryear{Gerkey, Thrun, and Gordon}{Gerkey
  et~al\mbox{.}}{2006}]%
        {gerkey2006visibility}
\bibfield{author}{\bibinfo{person}{Brian~P Gerkey}, \bibinfo{person}{Sebastian
  Thrun}, {and} \bibinfo{person}{Geoff Gordon}.}
  \bibinfo{year}{2006}\natexlab{}.
\newblock \showarticletitle{Visibility-based pursuit-evasion with limited field
  of view}.
\newblock \bibinfo{journal}{\emph{The International Journal of Robotics
  Research}} \bibinfo{volume}{25}, \bibinfo{number}{4} (\bibinfo{year}{2006}),
  \bibinfo{pages}{299--315}.
\newblock


\bibitem[\protect\citeauthoryear{Isler, Kannan, and Khanna}{Isler
  et~al\mbox{.}}{2006}]%
        {islerRandomizedPursuitEvasionLocal2006}
\bibfield{author}{\bibinfo{person}{Volkan Isler}, \bibinfo{person}{Sampath
  Kannan}, {and} \bibinfo{person}{Sanjeev Khanna}.}
  \bibinfo{year}{2006}\natexlab{}.
\newblock \showarticletitle{Randomized {Pursuit}-{Evasion} with {Local}
  {Visibility}}.
\newblock \bibinfo{journal}{\emph{SIAM Journal on Discrete Mathematics}}
  \bibinfo{volume}{20}, \bibinfo{number}{1} (\bibinfo{date}{Jan.}
  \bibinfo{year}{2006}), \bibinfo{pages}{26--41}.
\newblock
\showISSN{0895-4801, 1095-7146}
\urldef\tempurl%
\url{https://doi.org/10.1137/S0895480104442169}
\showDOI{\tempurl}


\bibitem[\protect\citeauthoryear{Jin, Netrapalli, and Jordan}{Jin
  et~al\mbox{.}}{2020}]%
        {jin_what_2020}
\bibfield{author}{\bibinfo{person}{Chi Jin}, \bibinfo{person}{Praneeth
  Netrapalli}, {and} \bibinfo{person}{Michael Jordan}.}
  \bibinfo{year}{2020}\natexlab{}.
\newblock \showarticletitle{What {{Is Local Optimality}} in
  {{Nonconvex-Nonconcave Minimax Optimization}}?}. In
  \bibinfo{booktitle}{\emph{Proceedings of the 37th International Conference on
  Machine Learning}} \emph{(\bibinfo{series}{Proceedings of Machine Learning
  Research}, Vol.~\bibinfo{volume}{119})}. \bibinfo{publisher}{{PMLR}},
  \bibinfo{pages}{4880--4889}.
\newblock


\bibitem[\protect\citeauthoryear{Karabag, Ornik, and Topcu}{Karabag
  et~al\mbox{.}}{2021}]%
        {karabagDeceptionSupervisoryControl2021}
\bibfield{author}{\bibinfo{person}{Mustafa~O. Karabag},
  \bibinfo{person}{Melkior Ornik}, {and} \bibinfo{person}{Ufuk Topcu}.}
  \bibinfo{year}{2021}\natexlab{}.
\newblock \showarticletitle{Deception in {Supervisory} {Control}}.
\newblock \bibinfo{journal}{\emph{IEEE Trans. Automat. Control}}
  (\bibinfo{year}{2021}), \bibinfo{pages}{1--1}.
\newblock
\showISSN{1558-2523}
\urldef\tempurl%
\url{https://doi.org/10.1109/TAC.2021.3057991}
\showDOI{\tempurl}
\newblock
\shownote{Conference Name: IEEE Transactions on Automatic Control.}


\bibitem[\protect\citeauthoryear{Khouzani and Malacaria}{Khouzani and
  Malacaria}{2017}]%
        {khouzaniLeakageMinimalDesignUniversality2017}
\bibfield{author}{\bibinfo{person}{Mhr. Khouzani} {and}
  \bibinfo{person}{Pasquale Malacaria}.} \bibinfo{year}{2017}\natexlab{}.
\newblock \showarticletitle{Leakage-{Minimal} {Design}: {Universality},
  {Limitations}, and {Applications}}. In \bibinfo{booktitle}{\emph{2017 {IEEE}
  30th {Computer} {Security} {Foundations} {Symposium} ({CSF})}}.
  \bibinfo{publisher}{IEEE}, \bibinfo{address}{Santa Barbara, CA},
  \bibinfo{pages}{305--317}.
\newblock
\showISBNx{978-1-5386-3217-8}
\urldef\tempurl%
\url{https://doi.org/10.1109/CSF.2017.40}
\showDOI{\tempurl}


\bibitem[\protect\citeauthoryear{Lin, Jin, and Jordan}{Lin
  et~al\mbox{.}}{2020}]%
        {lin_near-optimal_2020}
\bibfield{author}{\bibinfo{person}{Tianyi Lin}, \bibinfo{person}{Chi Jin},
  {and} \bibinfo{person}{Michael~I. Jordan}.} \bibinfo{year}{2020}\natexlab{}.
\newblock \showarticletitle{Near-{{Optimal Algorithms}} for {{Minimax
  Optimization}}}. In \bibinfo{booktitle}{\emph{Proceedings of Thirty Third
  Conference on Learning Theory}} \emph{(\bibinfo{series}{Proceedings of
  Machine Learning Research}, Vol.~\bibinfo{volume}{125})}.
  \bibinfo{publisher}{{PMLR}}, \bibinfo{pages}{2738--2779}.
\newblock


\bibitem[\protect\citeauthoryear{Nouiehed, Sanjabi, Huang, Lee, and
  Razaviyayn}{Nouiehed et~al\mbox{.}}{2019}]%
        {nouiehed_solving_2019}
\bibfield{author}{\bibinfo{person}{Maher Nouiehed}, \bibinfo{person}{Maziar
  Sanjabi}, \bibinfo{person}{Tianjian Huang}, \bibinfo{person}{Jason~D Lee},
  {and} \bibinfo{person}{Meisam Razaviyayn}.} \bibinfo{year}{2019}\natexlab{}.
\newblock \showarticletitle{Solving a {{Class}} of {{Non-Convex Min-Max Games
  Using Iterative First Order Methods}}}. In \bibinfo{booktitle}{\emph{Advances
  in Neural Information Processing Systems}}. \bibinfo{publisher}{{Curran
  Associates, Inc.}}
\newblock


\bibitem[\protect\citeauthoryear{Ramırez and Geffner}{Ramırez and
  Geffner}{[n.d.]}]%
        {ramirezGoalRecognitionPOMDPs}
\bibfield{author}{\bibinfo{person}{Miquel Ramırez} {and}
  \bibinfo{person}{Hector Geffner}.} \bibinfo{year}{[n.d.]}\natexlab{}.
\newblock \showarticletitle{Goal {Recognition} over {POMDPs}: {Inferring} the
  {Intention} of a {POMDP} {Agent}}.
\newblock  (\bibinfo{year}{[n.\,d.]}).
\newblock


\bibitem[\protect\citeauthoryear{Savas, Hibbard, Wu, Tanaka, and Topcu}{Savas
  et~al\mbox{.}}{2022a}]%
        {savasEntropyMaximizationPartially2022}
\bibfield{author}{\bibinfo{person}{Yagiz Savas}, \bibinfo{person}{Michael
  Hibbard}, \bibinfo{person}{Bo Wu}, \bibinfo{person}{Takashi Tanaka}, {and}
  \bibinfo{person}{Ufuk Topcu}.} \bibinfo{year}{2022}\natexlab{a}.
\newblock \showarticletitle{Entropy {Maximization} for {Partially} {Observable}
  {Markov} {Decision} {Processes}}.
\newblock \bibinfo{journal}{\emph{IEEE Trans. Automat. Control}}
  \bibinfo{volume}{67}, \bibinfo{number}{12} (\bibinfo{date}{Dec.}
  \bibinfo{year}{2022}), \bibinfo{pages}{6948--6955}.
\newblock
\showISSN{1558-2523}
\urldef\tempurl%
\url{https://doi.org/10.1109/TAC.2022.3183564}
\showDOI{\tempurl}
\newblock
\shownote{Conference Name: IEEE Transactions on Automatic Control.}


\bibitem[\protect\citeauthoryear{Savas, Karabag, Sadler, and Topcu}{Savas
  et~al\mbox{.}}{2022b}]%
        {savas2022deceptive}
\bibfield{author}{\bibinfo{person}{Yagiz Savas}, \bibinfo{person}{Mustafa~O
  Karabag}, \bibinfo{person}{Brian~M Sadler}, {and} \bibinfo{person}{Ufuk
  Topcu}.} \bibinfo{year}{2022}\natexlab{b}.
\newblock \showarticletitle{Deceptive Planning for Resource Allocation}.
\newblock \bibinfo{journal}{\emph{arXiv preprint arXiv:2206.01306}}
  (\bibinfo{year}{2022}).
\newblock


\bibitem[\protect\citeauthoryear{Schulman, Wolski, Dhariwal, Radford, and
  Klimov}{Schulman et~al\mbox{.}}{2017}]%
        {schulman2017proximal}
\bibfield{author}{\bibinfo{person}{John Schulman}, \bibinfo{person}{Filip
  Wolski}, \bibinfo{person}{Prafulla Dhariwal}, \bibinfo{person}{Alec Radford},
  {and} \bibinfo{person}{Oleg Klimov}.} \bibinfo{year}{2017}\natexlab{}.
\newblock \showarticletitle{Proximal policy optimization algorithms}.
\newblock \bibinfo{journal}{\emph{arXiv preprint arXiv:1707.06347}}
  (\bibinfo{year}{2017}).
\newblock


\bibitem[\protect\citeauthoryear{Sukthankar, Geib, Bui, Pynadath, and
  Goldman}{Sukthankar et~al\mbox{.}}{2014}]%
        {sukthankar2014plan}
\bibfield{author}{\bibinfo{person}{Gita Sukthankar},
  \bibinfo{person}{Christopher Geib}, \bibinfo{person}{Hung~Hai Bui},
  \bibinfo{person}{David Pynadath}, {and} \bibinfo{person}{Robert~P Goldman}.}
  \bibinfo{year}{2014}\natexlab{}.
\newblock \bibinfo{booktitle}{\emph{Plan, activity, and intent recognition:
  Theory and practice}}.
\newblock \bibinfo{publisher}{Newnes}.
\newblock


\bibitem[\protect\citeauthoryear{Thekumparampil, Jain, Netrapalli, and
  Oh}{Thekumparampil et~al\mbox{.}}{2019}]%
        {thekumparampil_efficient_2019}
\bibfield{author}{\bibinfo{person}{Kiran~K Thekumparampil},
  \bibinfo{person}{Prateek Jain}, \bibinfo{person}{Praneeth Netrapalli}, {and}
  \bibinfo{person}{Sewoong Oh}.} \bibinfo{year}{2019}\natexlab{}.
\newblock \showarticletitle{Efficient Algorithms for Smooth Minimax
  Optimization}. In \bibinfo{booktitle}{\emph{Advances in Neural Information
  Processing Systems 32}}. \bibinfo{publisher}{{Curran Associates, Inc.}},
  \bibinfo{pages}{12680--12691}.
\newblock


\bibitem[\protect\citeauthoryear{Wang and Liu}{Wang and Liu}{2014}]%
        {wangLearningHideandseek2014}
\bibfield{author}{\bibinfo{person}{Qingsi Wang} {and} \bibinfo{person}{Mingyan
  Liu}.} \bibinfo{year}{2014}\natexlab{}.
\newblock \showarticletitle{Learning in hide-and-seek}. In
  \bibinfo{booktitle}{\emph{{IEEE} {INFOCOM} 2014 - {IEEE} {Conference} on
  {Computer} {Communications}}}. \bibinfo{pages}{217--225}.
\newblock
\urldef\tempurl%
\url{https://doi.org/10.1109/INFOCOM.2014.6847942}
\showDOI{\tempurl}
\newblock
\shownote{ISSN: 0743-166X.}


\end{thebibliography}

% \appendix
% \input{convergence_analysis}

\end{document}